\documentclass[11pt]{article}
\usepackage[utf8x]{inputenc}
\usepackage{amsmath}
\usepackage{amssymb}
\usepackage{amsthm}
\usepackage{fullpage}
\usepackage[colorinlistoftodos]{todonotes}
\usepackage[bookmarks,colorlinks,breaklinks]{hyperref}  
\usepackage{upgreek}
\hypersetup{linkcolor=blue,citecolor=blue,filecolor=blue,urlcolor=blue}
\newcommand{\lref}[2][]{\hyperref[#2]{#1~\ref*{#2}}}
\renewcommand{\eqref}[1]{\hyperref[#1]{(\ref*{#1})}}

\numberwithin{equation}{section}

\usepackage{authblk}
\DeclareMathOperator{\Supp}{supp}
\DeclareMathOperator{\spa}{sparsity}
\DeclareMathOperator{\dimn}{dim}
\DeclareMathOperator{\vspan}{span}

\newcommand{\poly}{\mathrm{poly}}
\newcommand{\etal}{{\it et.~al.}}
\newcommand{\F}{{\mathbb F}}
\newcommand{\N}{{\mathbb N}}

\usepackage{algorithm}
\usepackage[noend]{algpseudocode}
\makeatletter
\def\BState{\State\hskip-\ALG@thistlm}
\makeatother

\newtheorem{theorem}{Theorem}[section]
\newtheorem{conjecture}[theorem]{Conjecture}

\newtheorem{lemma}[theorem]{Lemma}
\newtheorem{observation}[theorem]{Observation}
\newtheorem{proposition}[theorem]{Proposition}
\newtheorem{definition}[theorem]{Definition}
\newtheorem{claim}[theorem]{Claim}

\usepackage{enumitem}


\title{Sub-linear Upper Bounds on Fourier dimension of Boolean Functions in terms of Fourier sparsity}
\author{\rm Swagato Sanyal}
\affil{School of Technology and Computer Science \\ Tata Institute of Fundamental Research \\Mumbai 400005}
\affil{\textit{swagatos@tcs.tifr.res.in}}

\begin{document}

\maketitle

\begin{abstract}
  We prove that the Fourier dimension of any Boolean function with
  Fourier sparsity $s$ is at most $O\left(s^{2/3}\right)$. Our proof
  method yields an improved bound of $\widetilde{O}(\sqrt{s})$
  assuming a conjecture of Tsang~\etal~\cite{tsang}, that for every
  Boolean function of sparsity $s$ there is an affine subspace of
  $\mathbb{F}_2^n$ of co-dimension $O(\poly\log s)$ restricted to
  which the function is constant. This conjectured bound is tight upto
  poly-logarithmic factors as the Fourier dimension and sparsity of
  the address function are quadratically separated. We obtain
  these bounds by observing that the Fourier dimension of a Boolean
  function is equivalent to its non-adaptive parity decision tree
  complexity, and then bounding the latter.
\end{abstract}

\section{Introduction}

The study of Boolean functions involves studying various properties of
Boolean functions and their inter-relationships. Two such properties,
which we investigate in this article, are the Fourier dimension and
the Fourier sparsity, which were first studied in the context of property
testing by Gopalan~\etal~\cite{parikshit}. Given a Boolean function
$f:\F_2^n \to \{1,-1\}$ with Fourier expansion
\[f(x)=\sum_{\gamma \in \widehat{\mathbb{F}_2^n}}\widehat{f}(\gamma) \chi_\gamma(x),\]
Fourier dimension and Fourier sparsity are defined as follows.
\begin{definition}[Fourier dimension and sparsity]
 For a Boolean function $f:\F_2^n \to \{1,-1\}$ with Fourier expansion
\[f(x)=\sum_{\gamma \in \widehat{\mathbb{F}_2^n}}\widehat{f}(\gamma)
\chi_\gamma(x),\] the {\em Fourier support} of $f$, denoted by $\Supp(f)$,
is defined as 
\[\Supp(\widehat{f}) := \{\gamma \in \widehat{\mathbb{F}_2^n}: \widehat{f}(\gamma) \neq 0\}.\]
The {\em Fourier sparsity} of $f$, denoted by $\spa(f)$, is defined as the size of the support,
i.e., 
\[\spa(f) :=|\Supp(\widehat{f})|,\]
while the {\em Fourier dimension} $\dimn(f)$ of $f$ is defined as the
dimension of span of $\Supp(\widehat{f})$.
\end{definition}
The following inequalities easily follow from the definition of Fourier sparsity and dimension.
\begin{equation}
\label{eq:order}
 \log_2 \spa(f) \leq \dimn(f) \leq \spa(f).
\end{equation}
There are functions (e.g., indicator functions of subspaces) for which
the first inequality is tight. For the second inequality, the function
known to us having the closest gap between dimension and sparsity is
the \textit{address function} $Add_s:\{0,1\}^{\frac12 \log s +
  \sqrt{s}}\to \{0,1\}$, defined
as $$Add_s(x,y_1,y_2,\dots,y_{\sqrt{s}}) := y_x, \quad x \in
\{0,1\}^{\frac12\log s}, y_i \in \{0,1\}.$$ In other words, at any input
$(x,y)$, $Add_s(x,y)$ is the value of the addresee input bit $y_x$
indexed by the addressing variables $x$. The address function\footnote{To be precise, we should consider 
the $\pm 1$ version of the address function described here, where the $0$ and $1$ in the range
are interpreted as $+1$ and $-1$ respectively.} has
sparsity $s$ and dimension at least $\sqrt{s}$.
It is believed
that this is the tight upper bound for $\dimn(f)$ in terms of
$\spa(f)$. I. e., it is believed that the upper bound in
\eqref{eq:order} can be improved to $\dimn(f) \leq
\sqrt{\spa(f)}$\footnote{This is one of the conjectures proposed in
  the open problem session at the Simons workshop on Real Analysis in Testing, Learning and Inapproximability.}. 

Our main result is the following, which to our knowledge is the first improvement over
the trivial $\dim(f) \leq \spa(f)$ bound.
\begin{theorem}
 \label{thm:th1}
Let $f$ be a Boolean function with $\spa(f)=s$. Then, $\dim(f)=O\left(s^{2/3}\right).$
\end{theorem}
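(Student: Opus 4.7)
The first step is to recast Fourier dimension operationally. A non-adaptive parity decision tree for $f$ is specified by a set of parities $Q\subseteq \F_2^n$; the values $\{\chi_\gamma(x)\}_{\gamma\in Q}$ determine $f(x)$ exactly when $\Supp(\widehat{f})\subseteq\vspan(Q)$. Hence $\dimn(f)$ equals the non-adaptive parity decision tree complexity of $f$, and our task reduces to producing a subspace $V\supseteq\Supp(\widehat{f})$ of dimension $O(s^{2/3})$.

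I would build such a $V$ via a heavy/light split of the Fourier spectrum. Fix a threshold $\tau=\Theta(s^{-1/3})$. The set of $\tau$-heavy characters $H=\{\gamma:|\widehat{f}(\gamma)|\ge\tau\}$ satisfies $|H|\le\tau^{-2}=O(s^{2/3})$ by Parseval, so setting $V_0:=\vspan(H)$ already costs at most $O(s^{2/3})$ dimensions. It then suffices to show that the light characters $L=\Supp(\widehat{f})\setminus H$ can be covered by extending $V_0$ by only $O(s^{2/3})$ further dimensions.

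For the light part I would study the restriction of $f$ to a generic coset of $V_0^\perp$: such a restriction is again $\pm 1$-valued, and its Fourier support lives naturally in $\F_2^n/V_0$, with coefficients equal to signed sums of $\widehat{f}$ over cosets of $V_0$. Two structural tools feed into the argument: (i) the granularity of sparse Boolean functions, which forces $|\widehat{f}(\gamma)|\ge\poly(1/s)$ for every nonzero coefficient; and (ii) the autocorrelation identity $\sum_\alpha\widehat{f}(\alpha)\widehat{f}(\alpha+\gamma)=0$ for $\gamma\neq 0$, which prevents the support from being freely distributed across cosets of $V_0$. Combining these via a double-counting or iterative argument on the residual function, one aims to show that $L$ spans at most $O(s^{2/3})$ additional dimensions in the quotient. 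Balancing the two contributions via $\tau=\Theta(s^{-1/3})$ then yields the claimed bound.

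The main obstacle I expect is precisely this quotient-dimension bound for the light characters. Without further structure, $L$ could a priori span an unbounded-dimensional space modulo $V_0$, and the Boolean constraint $f^2\equiv 1$ must be leveraged in an essential way. I view this as the technical heart of the argument; the specific exponent $2/3$ ought to emerge from optimizing $\tau$ against the quantitative form of the additive/combinatorial lemma controlling the light span, and a weaker (conjectured) structural input of the type formulated by Tsang \etal{} would replace $2/3$ by $1/2$ with polylogarithmic slack, matching the conditional $\widetilde O(\sqrt{s})$ claim in the abstract.
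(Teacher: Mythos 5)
Your opening reduction is correct and matches the paper: $\dimn(f)$ equals the non-adaptive parity decision tree complexity, so it suffices to find $O(s^{2/3})$ linear forms whose span contains $\Supp(\widehat{f})$. From that point on, however, your argument has a genuine gap: the entire bound on the ``light'' characters is stated as an aim (``one aims to show that $L$ spans at most $O(s^{2/3})$ additional dimensions''), not proved, and you correctly identify it yourself as the technical heart. Worse, the heavy/light split does no work in the worst case: a sparse Boolean function can have all Fourier coefficients of magnitude $\Theta(1/\sqrt{s})$ (bent-type functions), in which case $H=\emptyset$ for your threshold $\tau=\Theta(s^{-1/3})$, $V_0$ is trivial, and the whole theorem rests on the unproven light-part lemma. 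Granularity gives only $|\widehat{f}(\gamma)|\ge \Omega(1/s)$, which via Parseval bounds $|L|$ by $s^2$ --- weaker than the trivial $|L|\le s$ --- and you do not indicate how the autocorrelation identity would be converted into a dimension bound on $L$ modulo $V_0$. So no step of the proposal actually improves on the trivial bound $\dimn(f)\le s$.

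The paper's route is different and is worth contrasting. It does not split the spectrum by coefficient magnitude at all. Instead it uses a structural lemma of Tsang et al.\ (via the $\ell_1$-norm bound $\|\widehat{f}\|_1\le\sqrt{s}$): any sparsity-$s$ Boolean function is constant on some affine subspace of co-dimension $O(\sqrt{s})$. The non-adaptive tree is built in rounds: in each round one takes the currently sparsest-\emph{largest} restriction $g$, queries the $O(\sqrt{|\Supp(\widehat g)|})$ parities that make $g$ constant on one coset, and observes that this forces every character of $g$ to pair up with another across \emph{all} cosets, halving $g$'s contribution to the union $\mathcal{S}$ of supports. The uncertainty principle guarantees some restriction has support at least $(l^{(i)})^2/s$, yielding the recursion $l^{(i+1)}\le l^{(i)}-\bigl((l^{(i)})^2/s-1\bigr)/2$, hence $l^{(i)}\le 4s/i$; after $t=O(s^{1/3})$ rounds the residual support has size $O(s^{2/3})$ and is queried wholesale. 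The total cost $\sum_{i\le t}O(\sqrt{s/i})+O(s^{2/3})=O(s^{2/3})$ gives the theorem. If you want to salvage your outline, you would need to either prove your quotient-dimension lemma for $L$ or replace the heavy/light split with a progress measure of this iterative kind.
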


This result is proved using a lemma of
Tsang~\etal~\cite{tsang} bounding the co-dimension of an affine subspace
restricted to which the function reduces to a constant, in terms of
Fourier sparsity of the function. 

\begin{lemma}[Corollary of {\cite[Lemma~30]{tsang}}]
\label{lem:set}
Let $f:\mathbb{F}_2^n \rightarrow \{1,-1\}$ be a Boolean function
with Fourier sparsity $s$. Then there is an affine subspace $V$ of $\mathbb{F}_2^n$ of co-dimension $O(\sqrt{s})$ such
that $f$ is constant on $V$.
\end{lemma}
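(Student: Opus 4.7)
The plan is to drive the Fourier sparsity of $f$ down to $O(1)$ via an iterative sequence of affine hyperplane restrictions, arguing that at each step one can reduce the sparsity by $\Omega(\sqrt{s'})$, where $s'$ denotes the Fourier sparsity just before that step. Because $\sum_{k\ge 1} 1/\sqrt{k}$ telescopes in the right way, this procedure terminates after $O(\sqrt{s})$ steps with the restricted function having sparsity $O(1)$; a further constant number of restrictions then makes $f$ constant, giving an affine subspace of co-dimension $O(\sqrt{s})$ overall.

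The key elementary observation behind every restriction step is that on the hyperplane $H_{\gamma,b}=\{x:\chi_\gamma(x)=b\}$ one has $\chi_{\alpha+\gamma}\equiv b\,\chi_\alpha$, so in the Fourier expansion of $f|_{H_{\gamma,b}}$ pairs of coefficients at $\alpha$ and $\alpha+\gamma$ merge into a single coefficient $\widehat f(\alpha)+b\,\widehat f(\alpha+\gamma)$. In particular, every pair $\{\alpha,\alpha+\gamma\}\subseteq \Supp(\widehat f)$ on which this sum vanishes reduces the sparsity by $2$, while a pair with only one element in the support is merely re-weighted. Parseval's identity immediately produces a nonzero character $\gamma^\star$ with $|\widehat f(\gamma^\star)|\ge 1/\sqrt{s'}$, and restricting on $\gamma^\star$ with the correct sign $b$ produces a constant-coefficient boost of at least $1/\sqrt{s'}$. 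However, this greedy choice by itself only reduces the sparsity by $O(1)$ per step, so it yields merely the weak bound $O(s)$ on the total co-dimension.

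The main obstacle, and the substantive content of Lemma~30 of \cite{tsang}, is thus to exhibit a $\gamma$ producing $\Omega(\sqrt{s'})$ cancelling pairs in $\Supp(\widehat f)$ simultaneously. My plan for this step is to exploit the additive-combinatorial structure that the Boolean-valuedness of $f$ forces on $\Supp(\widehat f)$: the identity $f^2\equiv 1$ gives the convolution equation $\widehat f\ast\widehat f=\delta_0$, while Cauchy--Schwarz gives $\|\widehat f\|_1\le \sqrt s$, and together these constrain the support to have large additive energy. Applying Chang's lemma, or the sharper Sanders quantitative Bogolyubov--Ruzsa theorem, to the portion of the support carrying the bulk of the $L_1$ Fourier mass, one locates a low-dimensional subspace whose annihilator contains a $\gamma$ that pairs up many elements of the support; choosing the sign $b$ by averaging (or by a pigeonhole over $b\in\{+1,-1\}$) then makes $\Omega(\sqrt{s'})$ of these pairs cancel in one shot. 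Formalising this additive-combinatorial input is the heart of the argument and the hardest step; the recursive bookkeeping on top of it, together with the final handful of restrictions used to flatten an $O(1)$-sparse Boolean function to a constant, are routine.
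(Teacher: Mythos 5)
The paper does not reprove this statement at all: it is labelled a corollary, obtained by combining the cited black box (Theorem~\ref{sphilka}, i.e.\ Lemma~30 of Tsang~\etal, which gives an affine subspace of co-dimension $O(\|\widehat{f}\|_1)$ on which $f$ is constant) with Claim~\ref{claim:lonespar}, namely $\|\widehat{f}\|_1 \le \|\widehat{f}\|_2\sqrt{s} = \sqrt{s}$ by Cauchy--Schwarz and Parseval. Your proposal instead tries to reprove the substance of that black box, and the one step carrying all of the content --- exhibiting, for a restriction of current sparsity $s'$, a single direction $\gamma$ along which $\Omega(\sqrt{s'})$ pairs of support elements cancel simultaneously --- is exactly the step you leave as a ``plan,'' with a gesture at Chang's lemma and Bogolyubov--Ruzsa but no derivation. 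That claim is stronger than what is needed, is not obviously true (a support can realize each nonzero difference only $O(1)$ times, and the convolution identity $\widehat{f}\ast\widehat{f}=\delta_0$ by itself only forces each realized difference to occur at least twice), and is not established by naming additive-combinatorial tools. So this is a genuine gap, not a routine omission.

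The frustrating part is that the greedy argument you dismiss in your second paragraph essentially works; you discarded it because you measured progress by the wrong potential function. Track $|\widehat{g}(0)|$ rather than the sparsity. If the current restriction $g$ (still of sparsity at most $s$) is non-constant with $|\widehat{g}(0)| = c$, then Parseval gives a nonzero $\gamma^\star$ with $|\widehat{g}(\gamma^\star)| \ge \sqrt{(1-c^2)/s}$; restricting to the hyperplane $\chi_{\gamma^\star}=b$ with the sign $b$ chosen so that $\widehat{g}(0)$ and $\widehat{g}(\gamma^\star)$ merge constructively yields a restriction $g'$ with $|\widehat{g'}(0)| \ge c + \sqrt{(1-c^2)/s}$, and sparsity does not increase under restriction. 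Writing $u=1-c$, each co-dimension-$1$ step decreases $u$ by at least $\sqrt{u/s}$, so $u$ halves within $O(\sqrt{us})$ steps; summing the geometric series over halvings, and noting that once $u<1/s$ a single further step would force $u<0$ (impossible for a Boolean function, so $g$ must already be constant), the total co-dimension is $O(\sqrt{s})$. Either carry out this short argument, or do what the paper does and simply cite Lemma~30 of \cite{tsang} together with $\|\widehat{f}\|_1\le\sqrt{s}$; as written, your proposal does neither.
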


Tsang~\etal~\cite{tsang} proved a more general result in terms of
Fourier $l_1$-norm (see \lref[Section]{sec:prelim} for more
details). Tsang~\etal~proved this result while trying to investigate
the {\em log rank conjecture} in communication complexity for {\em
  xor} functions. The log rank conjecture is a long standing and
important conjecture in communication complexity. The statement of the
conjecture is that the deterministic communication complexity of a
Boolean function is asymptotically bounded above by some fixed
poly-logarithm of the rank of it's communication
matrix. Tsang~\etal~\cite{tsang} suggested a direction towards proving
log-rank conjecture for an important class of functions called xor
functions. A Boolean function $f(x,y)$ on two $n$ bit inputs is a xor
function if there exists a Boolean function $F$ on $n$ bits such that
$f(x,y)=F(x \oplus y)$.  In particular, they propose a protocol for
such a $f$ based on the parity decision tree of $f$ and show that the
communication complexity of this proposed protocol is polylogarithmic
in rank of the communication matrix if the following related conjecture is true.
\begin{conjecture}[{\cite[Conjecture~27]{tsang}}]
 \label{conj:tsang}
 There exists a constant $c > 0$ such that for every Boolean function
 $f$ with Fourier sparsity $s$, there exists an affine subspace of
 co-dimension $O\left(\log^c s\right)$ on which $f$ is constant.
\end{conjecture}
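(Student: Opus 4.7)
The plan is to attack the conjecture via an iterative restriction scheme. Starting from $f$ with sparsity $s$, I would repeatedly pick a single character $\gamma \in \widehat{\F_2^n} \setminus \{0\}$ and a sign $b \in \{0,1\}$ such that restricting the current function to the hyperplane $\{x : \chi_\gamma(x) = (-1)^b\}$ drops the Fourier sparsity by a multiplicative factor of $1 - 1/\log^{O(1)} s$. Each such restriction adds $1$ to the total co-dimension, and after $\log^{O(1)} s$ iterations the sparsity would be driven below $2$, at which point the function is constant on the resulting affine subspace, whose overall co-dimension is $\log^{O(1)} s$.

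The core technical lemma I would aim for is: for every Boolean $f$ with $\spa(f) = s > 1$, there exist $\gamma \in \widehat{\F_2^n} \setminus \{0\}$ and $b \in \{0,1\}$ such that the restriction $f|_{\chi_\gamma = (-1)^b}$ has sparsity at most $s \cdot (1 - 1/\log^{O(1)} s)$. Under restriction to $\{\chi_\gamma = 1\}$ the characters $\delta$ and $\delta + \gamma$ in $\Supp(\widehat f)$ collapse onto a single character, so the post-restriction sparsity is controlled by the number of pairs $\{\delta, \delta+\gamma\}$ both lying in $\Supp(\widehat f)$ whose coefficients cancel. The task is to show that some $\gamma$ induces many such collisions. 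I would attempt this by an averaging/pigeonhole over the additive structure of $\Supp(\widehat f)$, leveraging Parseval's $\ell_2$ constraints and the integrality of $f$, which forces strong linear relations among coefficients of triples $\gamma_1 + \gamma_2 + \gamma_3 = 0$.

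To actually produce such a $\gamma$ I would try to combine the Fourier-analytic machinery of Tsang~\etal~\cite{tsang} (in particular their weight-$2$ and weight-$4$ inequalities on $\widehat f$) with quantitative additive-combinatorial results on sets of small doubling, such as Sanders' quasi-polynomial Bogolyubov-type theorem or Bloom--Sisask-style estimates, to locate a direction $\gamma$ that exposes the internal structure of $\Supp(\widehat f)$. A plausible alternative route is via Fourier entropy: one argues that if no single $\gamma$ yields a sparsity drop, then the spectral distribution of $f$ must be so spread that it is incompatible with $f$ being $\{\pm 1\}$-valued, contradicting the Fourier--Entropy--Influence-style inequalities known for sparse Boolean functions.

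\textbf{Main obstacle.} The statement is a widely believed open conjecture, and the $O(\sqrt{s})$ bound recorded in \lref[Lemma]{lem:set} already extracts essentially everything current techniques offer; bridging the exponential gap between $\sqrt{s}$ and $\log^{O(1)} s$ is precisely where the difficulty lies. Concretely, all standard invariants (spectral norm $\|\widehat f\|_1$, support size, Fourier entropy) decrease under a single hyperplane restriction by only a constant factor in the worst case, whereas a $(1 - 1/\log^{O(1)} s)$-per-step improvement is exactly the poly-logarithmic efficiency the conjecture demands. Any successful proof would almost certainly require a genuinely new structural theorem about the Fourier support of Boolean-valued, Fourier-sparse functions, rather than a refinement of the iterative restriction arguments currently available.
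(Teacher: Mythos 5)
This statement is \lref[Conjecture]{conj:tsang}: the paper does not prove it, but quotes it from Tsang~\etal\ as an open conjecture and only ever \emph{assumes} it (in the proof of \lref[Theorem]{thm:th2}). So there is no proof in the paper to compare against, and your proposal --- as you yourself acknowledge --- is a research plan rather than a proof. Its entire weight rests on the unproven ``core technical lemma'' that some single restriction $\chi_\gamma = (-1)^b$ cuts the sparsity by a multiplicative factor of $1 - 1/\log^{O(1)} s$. Iterating that lemma would indeed yield the conjecture, but the lemma is itself an open statement at least as strong as the conjecture, so the proposal reduces one open problem to another; none of the cited machinery (Sanders/Bloom--Sisask, Fourier entropy, the weight-$2$/weight-$4$ inequalities) is shown to deliver it.

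Two concrete points about where the plan breaks. First, a minor correction: restricting to $\chi_\gamma=(-1)^b$ merges $\delta$ and $\delta+\gamma$ into a single coset, so the restricted sparsity is at most $s - m_\gamma$ where $m_\gamma$ is the number of pairs in $\Supp(\widehat f)$ with sum $\gamma$, \emph{whether or not} the coefficients cancel; cancellation is not the bottleneck. What you actually need is a popular difference $\gamma$ with $m_\gamma \geq s/\log^{O(1)} s$. Second, plain averaging or pigeonhole cannot supply such a $\gamma$: the $\binom{s}{2}$ pairwise sums of $\Supp(\widehat f)$ lie in $\vspan(\Supp(\widehat f))$, a set of size $2^{\dimn(f)}$, so once $\dimn(f) \gg 2\log s$ the average multiplicity of a difference is far below $1$; and for a Sidon-like spectrum every $\gamma$ has $m_\gamma \leq 1$, forcing $\Omega(s)$ restrictions rather than $\log^{O(1)} s$. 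Any successful argument must use the Booleanity of $f$ to exclude such spectra --- exactly the structural theorem you correctly flag as missing. As written, the proposal establishes nothing beyond the $O(\sqrt{s})$ bound already recorded in \lref[Lemma]{lem:set}, and the statement should continue to be treated as a conjecture.
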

Tsang~\etal~prove the above conjecture for certain classes of
functions, which include functions with constant $\mathbb{F}_2$
degree and prove \lref[Lemma]{lem:set} for general functions. Our next
result shows that if we assume this conjecture instead of
\lref[Lemma]{lem:set}, we can improve the bound in
\lref[Theorem]{thm:th1} to the following (which is optimal upto
poly-logarithmic factors).
\begin{theorem}
 \label{thm:th2}
Let $f$ be a Boolean function with Fourier sparsity $s$.
Assuming \lref[Conjecture]{conj:tsang}, $\dim(f)=\widetilde{O}\left(\sqrt{s}\right).$
\end{theorem}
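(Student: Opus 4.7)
The plan is to exploit the identity $\dim(f)=D^{\oplus}_{\mathrm{na}}(f)$ between Fourier dimension and non-adaptive parity decision tree complexity, and then to bound the latter inductively using \lref[Conjecture]{conj:tsang}. The equivalence is easy: a set of parities $\gamma_1,\ldots,\gamma_t$ determines $f$ non-adaptively if and only if $f$ factors through $x\mapsto(\chi_{\gamma_1}(x),\ldots,\chi_{\gamma_t}(x))$, which by Fourier expansion is equivalent to $\Supp(\widehat{f})\subseteq\vspan\{\gamma_1,\ldots,\gamma_t\}$. Thus the task reduces to exhibiting a subspace $W\subseteq\widehat{\F_2^n}$ of dimension $\widetilde{O}(\sqrt{s})$ that contains $\Supp(\widehat{f})$.

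Proceeding by strong induction on $s$, I would apply \lref[Conjecture]{conj:tsang} to obtain an affine subspace $V=a+V'$ of codimension $k=O(\poly\log s)$ on which $f$ equals some constant $c$. Let $U=(V')^{\perp}\subseteq\widehat{\F_2^n}$, so $\dim U=k$. Computing the Fourier expansion of $f|_V$ in the characters of $V'$ and using $f|_V\equiv c$ yields the cancellation identity
\[
\sum_{\alpha\in\beta+U}\widehat{f}(\alpha)\,(-1)^{\langle\alpha,a\rangle}=0\qquad\text{for every non-trivial coset $\beta+U\neq U$.}
\]
Hence every non-trivial coset of $U$ that meets $\Supp(\widehat{f})$ meets it in at least two points, so the projection $\pi(\Supp(\widehat{f}))\subseteq\widehat{\F_2^n}/U$ has at most $s/2+1$ elements. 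Moreover, each coset restriction $f_{a'}(y):=f(a'+y)$ is a Boolean function on $V'$ of Fourier sparsity at most $s/2+1$, and a Parseval computation shows $\bigcup_{a'}\Supp(\widehat{f_{a'}})=\pi(\Supp(\widehat{f}))$.

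Since $\Supp(\widehat{f})\subseteq U+\pi(\Supp(\widehat{f}))$, the inductive hypothesis applied to each $f_{a'}$ produces a subspace $W_{a'}\subseteq\widehat{V'}$ of dimension $\widetilde{O}(\sqrt{s/2})$ containing $\Supp(\widehat{f_{a'}})$, yielding $\dim(f)\le k+\dim\bigl(\sum_{a'}W_{a'}\bigr)$. If the sum on the right can be controlled at $\widetilde{O}(\sqrt{s/2})$, unrolling the recursion $D(s)\le O(\poly\log s)+D(s/2)$ produces the desired $\widetilde{O}(\sqrt{s})$ bound on $\dim(f)$.

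The main obstacle is preventing an exponential blow-up when combining the subspaces $W_{a'}$ across the $2^k$ branches: the naive estimate $\dim(\sum_{a'}W_{a'})\le 2^k\cdot\widetilde{O}(\sqrt{s/2})$ is quasi-polynomial in $s$ when $k=\poly\log s$, and so too large. To avoid this, the $W_{a'}$ must be chosen coherently so as to share most of their dimensions. The structural fact to exploit is that the \emph{union} $\bigcup_{a'}\Supp(\widehat{f_{a'}})$ has only at most $s/2+1$ elements in total, far fewer than $2^k\cdot(s/2+1)$; so there should exist a single covering subspace of dimension $\widetilde{O}(\sqrt{s/2})$ for the entire union, perhaps obtained by applying the inductive hypothesis to a single Boolean function on $V'$ whose Fourier support contains $\pi(\Supp(\widehat{f}))$, or by directly constructing a common basis. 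Making this sharing rigorous is the crux of the proof.
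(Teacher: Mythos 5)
There is a genuine gap, and you have correctly located it yourself: the ``sharing'' step at the end is not a technical detail to be tidied up but the point where the whole strategy collapses. The recursion you are aiming for, $D(s)\le O(\poly\log s)+D(s/2)$, unrolls to $D(s)=O(\poly\log s\cdot\log s)$, i.e.\ a \emph{polylogarithmic} bound on Fourier dimension in terms of sparsity. That would contradict the address function, which has sparsity $s$ and dimension at least $\sqrt{s}$. Since the address function exists unconditionally, no argument assuming only \lref[Conjecture]{conj:tsang} can make the sharing step go through: the union $\bigcup_{a'}\Supp(\widehat{f_{a'}})=\pi(\Supp(\widehat{f}))$ has at most $(s+1)/2$ elements, but it is not the Fourier support of any single Boolean function in your hands, so the inductive hypothesis does not apply to it, and its span can genuinely have dimension close to its cardinality. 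The correct target bound $\widetilde{O}(\sqrt{s})$ satisfies $D(s/2)\approx D(s)/\sqrt{2}$, so a recursion of the form $D(s)\le k+D(s/2)$ is simply the wrong shape; the halving of the \emph{number of cosets} cannot be converted into a halving of the \emph{dimension needed to cover them}.

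The paper's proof keeps your setup (the equivalence $\dim(f)=\text{NADT}_\oplus(f)$, the conjecture-supplied restrictions, the pairing-up of support elements across cosets of $\vspan\Gamma$) but abandons the branch-wise recursion in favor of a single global iteration. At each step it selects \emph{one} current restriction $g$ of maximal sparsity, queries the $O(\log^c s)$ parities that make $g$ constant, and imposes those same parities on every branch; this shrinks the union $\mathcal{S}$ of all supports by at least $(|\Supp(\widehat{g})|-1)/2$. The key quantitative input is an averaging argument via the Uncertainty Principle (\lref[Theorem]{thm:up}): writing $f=\sum_j P_j^{(i)}\chi_{\beta_j^{(i)}}$ over the $l^{(i)}$ occupied cosets, one shows $\mathbb{E}_b\bigl[|\Supp(\widehat{f|_{V_b}})|\bigr]\ge (l^{(i)})^2/s$, so some branch has support of size at least $(l^{(i)})^2/s$. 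This yields only
\[
l^{(i+1)}\le l^{(i)}-\frac{(l^{(i)})^2/s-1}{2},
\]
a far weaker per-step reduction than halving, which resolves to $l^{(i)}\le 4s/i$ and forces $\Theta(\sqrt{s})$ iterations before the union is small enough to query outright --- and that is exactly where the unavoidable $\sqrt{s}$ in the final bound comes from. If you want to salvage your draft, replace the per-branch recursion by this global potential-function argument on $|\mathcal{S}|$.
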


\paragraph{Proof Idea:} We begin by making a simple, but crucial
observation that the Fourier dimension of a Boolean function is
equivalent to its non-adaptive parity decision tree complexity (see
\lref[Proposition]{prop:basic}). This offers us a potential approach
towards upper bounding the Fourier dimension of a Boolean function:
exhibiting a shallow non-adaptive parity decision tree of the
function. 

Towards this end, we first recall the construction of the
(adaptive) parity decision tree of Tsang~\etal~\cite{tsang}, which in
turn improves on an earlier construction due to 
Shpilka~\etal~\cite[Theorem~1.1]{shpilka}. The broad idea of their
construction is as follows: At any point in time, a partial
tree is maintained whose leaves are functions which are restrictions of the original
function on different affine subspaces. Then a non-constant
leaf is picked arbitrarily, and a small set
of linear restrictions is obtained by invoking \lref[Lemma]{lem:set}, such that the restricted function at that leaf becomes
constant. The next step is observing that if the function at the same leaf is restricted to
all the affine subspaces obtained by setting the same set of 
parities in all possible ways, the sparsity of each of the corresponding restricted functions
is at most half of that of the original function. This is because, in
the former restriction, since the function becomes constant, the
Fourier coefficients corresponding to non-constant characters must
disappear in the restricted space. This can only happen if every
non-constant parity gets identified with at least one
other parity. This identification leads to halving of the
support. Proceeding in this way, they obtain a parity decision tree
of depth O($\sqrt{s}$).

Note that the choice of parities depends on the leaf (function)
chosen, and hence on the outcomes of the preceding queries. Thus the
constructed tree is an adaptive one. In this article, we make this
tree non-adaptive, at the cost of a small increase in depth. At each
level, we choose an appropriate function, invoke
\lref[Lemma]{lem:set}, and obtain the restrictions which make it
constant. Then we query the same set of parities at every
leaf. The next step is arguing that this leads to a significant
reduction of sparsity in the next level. This is done using the
Uncertainty Principle (\lref[Theorem]{thm:up}). Continuing in
this fashion, we show that in a small number of levels, the size of the
union of the Fourier supports of all the leaves becomes so small that
we can query all of them, thereby turning all the leaves into
constants.

\section{Preliminaries}\label{sec:prelim}

let $f:\F_2^n \rightarrow \{1,-1\}$ be a Boolean function. We
think of the range $\{+1,-1\}$ as a subset of $\mathbb{R}$. The inputs
to $f$ are $n$ variables $x_1, \ldots, x_n$ which take values in
$\F_2$. We identify the additive group in $\mathbb{F}_2$
with the group $\{+1,-1\}$ under real number multiplication, and think
of the variables as taking $+1$ and $-1$ values, where $0$ and $1$ of
$\F_2$ get mapped to $+1$ and $-1$ respectively. We denote
this group isomorphism by $(-1)^{(\cdot)}$, \i.e., $(-1)^0$ is $1$ and
$(-1)^1$ is $-1$. When the $x_i$'s are $\pm 1$, it is well known that
every Boolean function $f(x)$ (where $x$ stands for $x_1,\ldots, x_n$)
can be uniquely written as
\[f(x)=\sum_{S \subseteq [n]}\widehat{f}(S)\prod_{i \in S}x_i.\] Thus,
when the variables are $\pm 1$, $f$ can be written as a multilinear
real polynomial. For every $S \subseteq [n]$, the product $\prod_{i \in
  S}x_i$ is the logical xor of the bits in $S$, and $\widehat{f}(S)$ is a
real number.  These products are exactly the \emph{characters} of
$\mathbb{F}_2^n$, which are $\pm 1$ versions of the linear forms
belonging to the dual vector space $\widehat{\mathbb{F}_2^n}$ of
$\mathbb{F}_2^n$.  We adopt the following notation in this paper:
\[f(x)=\sum_{\gamma \in \widehat{\mathbb{F}_2^n}}\widehat{f}(\gamma) \chi_\gamma(x).\]
Here, each $\gamma \in \widehat{\mathbb{F}_2^n}$ is a linear function
from $\mathbb{F}_2^n$ to $\mathbb{F}_2$, and $\chi_\gamma$ is $(-1)^\gamma$.

We recall some standard definitions and facts about the Fourier
coefficients.
\begin{definition}
Let $f(x)=\sum_{\gamma \in \widehat{\mathbb{F}_2^n}}\widehat{f}(\gamma) \chi_\gamma(x)$ be a Boolean function. The $p$-th spectral norm $\|\widehat{f}\|_p$ of $f$ is defined as:
\[\|\widehat{f}\|_p := \left[\sum_{\gamma \in \widehat{\F_2^n}}|\widehat{f}(\gamma)|^p\right]^{1/p}.\] 
\end{definition}

\begin{lemma}[Parseval's identity]
 For a Boolean function $f$, $\|\widehat{f}\|_2 = 1$.
\end{lemma}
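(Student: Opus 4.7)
The plan is to evaluate $\mathbb{E}_x[f(x)^2]$ in two different ways and equate the results, where the expectation is taken with respect to the uniform distribution on $\mathbb{F}_2^n$. Since $f$ takes values in $\{+1,-1\}$, we have $f(x)^2 = 1$ for every $x$, so $\mathbb{E}_x[f(x)^2] = 1$. On the other hand, substituting the Fourier expansion and expanding the square produces
\[
\mathbb{E}_x\!\left[f(x)^2\right] \;=\; \sum_{\gamma,\delta \in \widehat{\mathbb{F}_2^n}} \widehat{f}(\gamma)\,\widehat{f}(\delta)\; \mathbb{E}_x\!\left[\chi_\gamma(x)\,\chi_\delta(x)\right].
\]
The goal is to show that this double sum collapses to $\sum_{\gamma} \widehat{f}(\gamma)^2$, which combined with the first evaluation yields $\|\widehat{f}\|_2^2 = 1$ and hence the claim.

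The only step that requires a short argument is establishing the orthonormality of the characters under the uniform measure. Using the identification of addition in $\mathbb{F}_2$ with multiplication in $\{+1,-1\}$, one has $\chi_\gamma(x)\,\chi_\delta(x) = (-1)^{\gamma(x)+\delta(x)} = \chi_{\gamma+\delta}(x)$, where $\gamma+\delta$ denotes the pointwise sum of linear forms in $\widehat{\mathbb{F}_2^n}$. Hence $\mathbb{E}_x[\chi_\gamma(x)\chi_\delta(x)] = \mathbb{E}_x[\chi_{\gamma+\delta}(x)]$. If $\gamma = \delta$, then $\gamma+\delta = 0$, the character $\chi_0$ is identically $1$, and the expectation equals $1$. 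If $\gamma \neq \delta$, then $\gamma+\delta$ is a nonzero linear form on $\mathbb{F}_2^n$; its kernel is an index-$2$ subgroup, so it attains the values $0$ and $1$ on equally many inputs, and the expectation vanishes.

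Plugging this in collapses the double sum to $\sum_\gamma \widehat{f}(\gamma)^2$, and equating with $1$ (followed by taking square roots) gives $\|\widehat{f}\|_2 = 1$. There is no genuine obstacle here; the statement is a standard consequence of Pontryagin duality for the finite abelian group $\mathbb{F}_2^n$. Once orthonormality of characters is in hand, the whole identity reduces to the pointwise tautology $f(x)^2 = 1$, and the only point demanding a little care is the verification that every nontrivial character has mean zero, which follows from the hyperplane argument above.
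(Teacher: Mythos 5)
Your proof is correct and is the standard argument (orthonormality of characters plus the pointwise identity $f(x)^2=1$); the paper states Parseval's identity as a known fact without supplying a proof, so there is nothing to compare against, but your hyperplane argument for the vanishing of nontrivial character means is exactly the right justification.
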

The 1st spectral norm of a Boolean function can be bounded via sparsity as follows.
\begin{claim}\label{claim:lonespar}For a Boolean function $f$, $\|\widehat{f}\|_1 \leq \sqrt{s}$.
\end{claim}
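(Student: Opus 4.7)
The plan is to deduce this from Parseval's identity together with the Cauchy--Schwarz inequality applied to the support of $\widehat{f}$. Writing out the $1$-norm and restricting the sum to $\Supp(\widehat{f})$ (the other terms vanish), we get
\[
\|\widehat{f}\|_1 \;=\; \sum_{\gamma \in \Supp(\widehat{f})} |\widehat{f}(\gamma)| \cdot 1.
\]
Applying Cauchy--Schwarz to the two vectors $(|\widehat{f}(\gamma)|)_{\gamma \in \Supp(\widehat{f})}$ and the all-ones vector on $\Supp(\widehat{f})$ yields
\[
\sum_{\gamma \in \Supp(\widehat{f})} |\widehat{f}(\gamma)| \;\leq\; \sqrt{|\Supp(\widehat{f})|} \cdot \sqrt{\sum_{\gamma \in \Supp(\widehat{f})} |\widehat{f}(\gamma)|^2}.
\]
The first factor on the right-hand side is $\sqrt{s}$ by the definition of Fourier sparsity. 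For the second factor, since characters not in $\Supp(\widehat{f})$ contribute zero, the sum equals $\|\widehat{f}\|_2^2$, which is $1$ by Parseval's identity. Combining these gives $\|\widehat{f}\|_1 \leq \sqrt{s}$, as claimed.

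There is no real obstacle here; this is a one-line argument, and the only thing to be careful about is remembering to restrict to the Fourier support before invoking Cauchy--Schwarz so that the geometric factor comes out as $\sqrt{s}$ rather than the trivial $\sqrt{2^n}$.
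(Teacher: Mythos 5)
Your argument is correct and is exactly the paper's proof: Cauchy--Schwarz applied over the Fourier support against the all-ones vector, followed by Parseval's identity to evaluate $\|\widehat{f}\|_2$. You have merely written out explicitly the two vectors to which Cauchy--Schwarz is applied, which the paper leaves implicit.
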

\begin{proof}
\begin{align*}
\|\widehat{f}\|_1  \leq \|\widehat{f}\|_2. \sqrt{s}   = \sqrt{s}.
\end{align*}
The first inequality follows due to Cauchy-Schwarz inequality while the
second equality follows from Parseval's identity.
\end{proof}

For proving our results, we shall use the following version of the Uncertainty Principle. The reader is referred to \cite{aobf} for a proof.
\begin{theorem}[Uncertainty Principle]
\label{thm:up}
 Let $p: \mathbb{R}^n \rightarrow \mathbb{R}$ be a real multilinear
 $n$-variate polynomial with sparsity $s$ (i.e, it has $s$ monomials with non-zero coefficients). Let $U_n$ denote the 
uniform distribution on $\{+1,-1\}^n$. Then
\[\Pr_{x \sim U_n}[p(x) \neq 0] \geq \frac{1}{s}.\]
\end{theorem}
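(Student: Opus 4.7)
The plan is to exploit Parseval's identity to relate $\mathbb{E}[p^2]$ to the Fourier coefficients, and then bound $\|p\|_\infty$ in terms of the sparsity; together these pin down the size of the nonzero set of $p$ on the hypercube. The whole argument is short once the right inequalities are lined up.

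Concretely, I would write $p(x) = \sum_{S} \widehat{p}(S) \chi_S(x)$, where the sum ranges over the $s$ subsets $S \subseteq [n]$ with $\widehat{p}(S) \neq 0$ and $\chi_S(x) = \prod_{i \in S} x_i$. The characters $\{\chi_S\}$ are orthonormal under $U_n$, so Parseval gives $\mathbb{E}_{x \sim U_n}[p(x)^2] = \|\widehat{p}\|_2^2$. Since $p(x)^2 \geq 0$ and $p(x)^2$ vanishes exactly where $p(x)$ does,
\[
\|\widehat{p}\|_2^2 \;=\; \mathbb{E}_{x \sim U_n}[p(x)^2] \;\leq\; \|p\|_\infty^2 \cdot \Pr_{x \sim U_n}[p(x) \neq 0].
\]
The task thus reduces to upper-bounding $\|p\|_\infty$ in terms of $s$.

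For this last step I would recycle the argument of Claim~\ref{claim:lonespar}. The triangle inequality gives $|p(x)| \leq \sum_S |\widehat{p}(S)| \cdot |\chi_S(x)| = \|\widehat{p}\|_1$ for every $x \in \{+1,-1\}^n$, since each character takes values in $\{+1,-1\}$. Cauchy-Schwarz, applied over the $s$-element support, then yields $\|\widehat{p}\|_1 \leq \sqrt{s} \cdot \|\widehat{p}\|_2$, so $\|p\|_\infty^2 \leq s \cdot \|\widehat{p}\|_2^2$. Substituting this into the displayed inequality and cancelling $\|\widehat{p}\|_2^2$ (which is strictly positive, since $p$ is not identically zero as $s \geq 1$) gives $\Pr[p \neq 0] \geq 1/s$, as required. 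There is no real obstacle in this plan; the only conceptual point worth highlighting is that the combination of Parseval with the bound $\|p\|_\infty \leq \|\widehat{p}\|_1$ automatically produces the sparsity factor through Cauchy-Schwarz, which is precisely what makes the uncertainty principle quantitative.
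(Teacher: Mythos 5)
Your proof is correct. The paper does not prove this theorem itself but defers to \cite{aobf}, and your argument (Parseval's identity, the chain $\|p\|_\infty \leq \|\widehat{p}\|_1 \leq \sqrt{s}\,\|\widehat{p}\|_2$, then cancelling $\|\widehat{p}\|_2^2 > 0$) is exactly the standard proof given there.
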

As stated in the introduction, we need the following theorem due to
Tsang~\etal~\cite{tsang}. 
\begin{theorem}[{\cite[Lemma 30]{tsang}}]
\label{sphilka}
let $f:\mathbb{F}_2^n \rightarrow \{1,-1\}$ be such that $\|\widehat{f}\|_1=A$. Then there is an affine subspace $V$ of $\mathbb{F}_2^n$ of co-dimension $O(A)$ such
that $f$ is constant on $V$.
\end{theorem}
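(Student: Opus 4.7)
The plan is to prove the theorem by iterative restriction: at each step I pass to a codimension-one affine subspace on which the Fourier $L_1$-norm of the restricted function decreases by at least an absolute constant, and I iterate $O(A)$ times. When $\|\widehat{f}\|_1 \leq 1$, Parseval's identity combined with the $\{+1,-1\}$-valuedness forces $f$ to be constant on the current subspace, so $O(A)$ restrictions suffice to deliver an affine subspace of codimension $O(A)$ on which $f$ is constant.

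The combinatorial tool driving the argument is the effect of a parity restriction on the Fourier spectrum. Restricting $\chi_\gamma(x) = b$ for some non-zero $\gamma$ and $b \in \{+1,-1\}$ identifies each pair $\{\alpha, \alpha + \gamma\}$ of characters in the quotient, producing a new coefficient $\widehat{f}(\alpha) + b\,\widehat{f}(\alpha + \gamma)$. A routine computation then yields
\[\tfrac{1}{2}\,\|\widehat{f|_{\chi_\gamma = +1}}\|_1 + \tfrac{1}{2}\,\|\widehat{f|_{\chi_\gamma = -1}}\|_1 \;=\; A \;-\; \sum_{\{\alpha,\,\alpha+\gamma\}} \min\!\bigl(|\widehat{f}(\alpha)|,\,|\widehat{f}(\alpha+\gamma)|\bigr),\]
so at least one choice of $b$ achieves a decrement equal to the pair-minimum sum on the right. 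To locate a suitable $\gamma$, I would sum this pair-sum over all $\gamma \neq 0$. Since each unordered pair of distinct characters arises uniquely from $\gamma = \alpha + \beta$, the total equals $\sum_{\alpha \neq \beta}\min(|\widehat{f}(\alpha)|,|\widehat{f}(\beta)|)$.

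The hard part will be extracting, from this aggregate, a single $\gamma \neq 0$ whose individual pair-sum is $\Omega(1)$. Pure pigeonhole over all of $\widehat{\mathbb{F}_2^n}\setminus\{0\}$ is too wasteful, and the aggregate itself can be modest when the coefficients are wildly uneven. I expect to overcome this by exploiting the Boolean structure of $f$: the dyadic granularity of its Fourier coefficients, together with a level-set argument on $|\widehat{f}|$, should show that enough $L_1$-mass sits on frequencies of comparable magnitude that some $\gamma$ collapses two of them and yields an $\Omega(1)$ decrement. A likely alternative or supplement is a Chang-style subspace bound on the set of ``large'' Fourier coefficients, which would let me pick $\gamma$ from a tightly controlled set and turn the pigeonhole into a constant-factor argument. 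Once such a decrement lemma is in hand, iterating it at most $O(A)$ times, and invoking Parseval to close out the final step where $\|\widehat{f}\|_1 \leq 1$, completes the proof.
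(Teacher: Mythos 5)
You should first note that the paper contains no proof of this statement to compare against: it is imported verbatim as Lemma~30 of Tsang~\etal, so your attempt must be judged entirely on its own. On its own, it is a strategy with the central lemma missing rather than a proof. The parts you do carry out are correct: the identity
\[
\tfrac12\big\|\widehat{f|_{\chi_\gamma=+1}}\big\|_1+\tfrac12\big\|\widehat{f|_{\chi_\gamma=-1}}\big\|_1=\|\widehat f\|_1-\sum_{\{\alpha,\alpha+\gamma\}}\min\big(|\widehat f(\alpha)|,|\widehat f(\alpha+\gamma)|\big)
\]
follows from $\tfrac12(|a+c|+|a-c|)=\max(|a|,|c|)$, restrictions preserve Booleanness, and the double counting of the aggregate over $\gamma\neq 0$ is right (modulo the small slip at the end: $\|\widehat f\|_1\le 1$ together with Parseval forces $f=\pm\chi_\gamma$, a \emph{parity}, not a constant, so one further restriction is needed). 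But the entire content of the theorem sits in the step you defer with ``I expect to overcome this'': the existence, for every non-parity Boolean $f$, of a single $\gamma\neq0$ whose pair-minimum sum is $\Omega(1)$. Neither the dyadic granularity of the coefficients nor a Chang-type bound is shown to deliver this, and the elementary certificates available give much less: taking $\gamma=\alpha_1+\alpha_2$ for the two largest coefficients $a_1\ge a_2$ only guarantees a decrement of $a_2$, and Parseval yields merely $a_2\ge(1-a_1^2)/(\|\widehat f\|_1-a_1)$, which tends to $0$ as $a_1\to1$ and is only $\Omega(1/A)$ in general.

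There is also a structural obstruction to the lemma as you state it. Since $1=\|\widehat f\|_2^2\le\|\widehat f\|_\infty\,\|\widehat f\|_1\le\|\widehat f\|_1^2$, every Boolean function has $\|\widehat f\|_1\ge1$; so if a non-parity restriction ever has $\|\widehat f\|_1\in(1,1+c)$, no decrement of $c$ is possible, and your scheme would need a separate gap theorem asserting that non-parity Boolean functions have spectral norm bounded away from $1$ by that same constant. This is exactly why the known arguments (Shpilka~\etal\ and Tsang~\etal) do not run a constant-per-step decrement of the $\ell_1$ norm: they instead track the \emph{largest} Fourier coefficient, using the restriction $\chi_{\alpha_1+\alpha_2}=b$ that merges the top two coefficients constructively (new top coefficient at least $a_1+a_2$ while the spectral norm does not increase), and drive it to $\pm1$ --- at which point the restricted function is a signed character --- with additional work to bound the number of steps by $O(A)$. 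Until you either prove your $\Omega(1)$-decrement claim (including the near-$1$ regime) or switch to a potential for which per-step progress can actually be certified, the argument is incomplete.
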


\lref[Lemma]{lem:set} is a simple corollary of this theorem via \lref[Claim]{claim:lonespar}.

We end this section by a simple proposition which is crucial to our proofs.
\begin{definition}[non-adaptive parity decision tree complexity]
Let $f$ be a Boolean function. The \emph{non-adaptive parity decision
  tree complexity} of $f$, (denoted by $\text{NADT}_\oplus(f)$), is defined as the minimum integer $t$ such that there
exist $t$ linear forms $\gamma_1, \ldots, \gamma_t \in \widehat{\mathbb{F}_2^n}$ such that $f$ is a junta of $\gamma_1, \ldots, \gamma_t$. In other words, on every input, specifying the outputs of the $\gamma_i$'s
specifies the output of $f$.
\end{definition}
\begin{proposition}
\label{prop:basic}
 For a Boolean function $f$, $\mbox{NADT}_\oplus(f)$ = dim($f$).
\end{proposition}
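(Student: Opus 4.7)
The plan is to prove the two inequalities $\mathrm{NADT}_\oplus(f) \leq \dim(f)$ and $\dim(f) \leq \mathrm{NADT}_\oplus(f)$ separately, both by direct manipulation of the Fourier expansion.

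For the upper bound $\mathrm{NADT}_\oplus(f) \leq \dim(f)$, I would let $d = \dim(f)$ and fix any basis $\gamma_1,\ldots,\gamma_d \in \widehat{\mathbb{F}_2^n}$ of $\mathrm{span}(\mathrm{Supp}(\widehat{f}))$. Every $\gamma \in \mathrm{Supp}(\widehat{f})$ is then an $\mathbb{F}_2$-linear combination $\gamma = \sum_{i \in S_\gamma} \gamma_i$ for some $S_\gamma \subseteq [d]$, and since the character map is a group homomorphism, $\chi_\gamma(x) = \prod_{i \in S_\gamma} \chi_{\gamma_i}(x)$. Substituting into the Fourier expansion shows $f(x) = \sum_{\gamma} \widehat{f}(\gamma) \prod_{i \in S_\gamma} \chi_{\gamma_i}(x)$, which is a function of $\chi_{\gamma_1}(x),\ldots,\chi_{\gamma_d}(x)$. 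Specifying the outputs of $\gamma_1,\ldots,\gamma_d$ therefore specifies $f(x)$, giving a non-adaptive parity decision tree of depth $d$.

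For the lower bound $\dim(f) \leq \mathrm{NADT}_\oplus(f)$, let $t = \mathrm{NADT}_\oplus(f)$ and let $\gamma_1,\ldots,\gamma_t$ be linear forms such that $f(x) = g(\chi_{\gamma_1}(x),\ldots,\chi_{\gamma_t}(x))$ for some $g : \{\pm 1\}^t \to \{\pm 1\}$. Expanding $g$ in its own Fourier basis, $g(y) = \sum_{T \subseteq [t]} \widehat{g}(T) \prod_{i \in T} y_i$, and substituting $y_i = \chi_{\gamma_i}(x)$, I get
\[
f(x) \;=\; \sum_{T \subseteq [t]} \widehat{g}(T)\, \chi_{\sum_{i \in T} \gamma_i}(x).
\]
By uniqueness of the Fourier expansion over $\mathbb{F}_2^n$ (the characters $\{\chi_\gamma\}$ form an orthonormal basis), any $\gamma \in \mathrm{Supp}(\widehat{f})$ must equal $\sum_{i \in T} \gamma_i$ for some $T$; that is, $\mathrm{Supp}(\widehat{f}) \subseteq \mathrm{span}(\gamma_1,\ldots,\gamma_t)$. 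Taking dimensions yields $\dim(f) \leq t$.

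Combining the two directions gives the proposition. There is no real obstacle here; the only subtlety worth stating cleanly is the lower bound step, where one must invoke uniqueness of the Fourier expansion to ensure that no character outside $\mathrm{span}(\gamma_1,\ldots,\gamma_t)$ can appear with a nonzero coefficient even after the various contributions from different $T$'s collapse and possibly cancel.
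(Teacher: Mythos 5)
Your proof is correct. The easy direction, $\mathrm{NADT}_\oplus(f) \le \dimn(f)$, is exactly the paper's argument: every character in $\Supp(\widehat f)$ factors through a basis of $\vspan(\Supp(\widehat f))$. For the reverse direction the two proofs differ in mechanism, though they rest on the same underlying fact, namely that the Fourier support of a junta of parities lies in the span of those parities. The paper assumes WLOG that $\gamma_1,\ldots,\gamma_t$ are linearly independent, extends them to a basis, applies the invertible linear change of variables $L$ to turn the parities into coordinate variables, notes that $\dimn$ is invariant under such a change, and then uses the standard fact that a junta of $t$ coordinate variables has all its Fourier mass on subsets of those coordinates. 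You instead expand the outer function $g$ on $\{\pm1\}^t$ in its own Fourier basis, compose with the $\chi_{\gamma_i}$, and invoke uniqueness of the Fourier expansion over $\mathbb{F}_2^n$ to conclude $\Supp(\widehat f)\subseteq\vspan(\gamma_1,\ldots,\gamma_t)$. Your route avoids the change-of-variables bookkeeping and the independence normalization (the span has dimension at most $t$ regardless), and you correctly flag the one real subtlety: distinct sets $T$ can map to the same character and their coefficients can collapse or cancel, which is precisely why the conclusion is only a containment of supports --- and a containment is all that is needed. Both arguments are complete; yours is marginally more self-contained.
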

\begin{proof}
If the outputs of a basis of span of $\Supp(\widehat{f})$ is specified, then that clearly specifies the outputs of all characters in $\Supp(\widehat{f})$, and hence it specifies the output of the function. Thus $\mbox{NADT}_\oplus(f) \leq$ dim($f$). \\ \\
Now, Let $\mbox{NADT}_\oplus(f)=t$. Let the outputs of $\gamma_1, \ldots, \gamma_t$ specify the output of $f$, and without loss of generality assume these linear forms to be linearly independent as vectors in $\widehat{\mathbb{F}_2^n}$.
Arbitrarily extend $\gamma_1, \ldots, \gamma_t$ to a basis $\gamma_1, \ldots, \gamma_n$ of $\widehat{\mathbb{F}_2^n}$.
For $x=(x_1, \ldots, x_n) \in \mathbb{F}_2^n$, let $L(x)=(\gamma_1(x), \ldots, \gamma_n(x))$. $L$ is easily seen to be an invertible linear transformation from $\mathbb{F}_2^n$
onto itself. Now, $\forall x \in \mathbb{F}_2^n, \forall i=1,\ldots, n, \gamma_i(x)=(L(x))_i$. Replacing $x$ by $L^{-1}(x)$ we have $\gamma_i(L^{-1}(x))=x_i$. Now consider the Boolean function
$g(x)=f(L^{-1}(x))=\sum_{\gamma \in \widehat{\mathbb{F}_2^n}}\widehat{f}(\gamma) (-1)^{\gamma(L^{-1}(x))}$. Clearly $\dimn(g)=\dimn(f)$. Also, $g$ is completely specified by the outputs
of $\gamma_i(L^{-1}(x))$'s for $i = 1, \ldots, t$. Since $\gamma_i(L^{-1}(x))=x_i$, we have that $g$ is a junta of $x_1, \ldots, x_t$. Thus all the monomials in $\Supp(\widehat{g})$ contain only the variables $x_1, \ldots, x_t$. Thus $\dimn(f) = \dimn(g)
\leq t = \mbox{NADT}_\oplus(f)$. \\ \\
The proposition follows by combining the two inequalities.
\end{proof}

\section{Upper Bounding Parity Decision Tree Complexity}

In this section, we upper bound the non-adaptive parity decision tree
complexity of a Boolean function $f$ with Fourier sparsity at most
$s$. Consider the following procedure, parametrized by a parameter
$\tau \in \N$ (that we will set later) that constructs the non-adaptive
parity decision tree.

\newcommand{\nadtproc}{\text{\sc Non-adaptive-parity-decision-tree-procedure}}
\begin{itemize}
\item[] $\nadtproc_\tau(f)$
\item[] Input: Boolean function $f: \F_2^n \to \{1,-1\}$; Parameter: $\tau
  \in \N$
\begin{enumerate}
\item Set $\Gamma \gets \emptyset$, $\mathcal{S} \gets\Supp(\widehat{f})$
  and 
  $\mathcal{F} \gets \{f\}$.
\item\label{step:while} While $|\mathcal{S}| > \tau$, do
\begin{enumerate}
\item\label{step:chooseg} Let $g$ be a function in $\mathcal{F}$ with the largest Fourier
  sparsity. Let $\gamma_1,\ldots,\gamma_{n_g}$ be linear functions and
  $b_1, \ldots, b_{n_g} \in \mathbb{F}_2$ be such that a largest
  affine subspace on which $g$ is constant is $\{x \in \F_2^n :
  \gamma_1(x)=b_1, \ldots, \gamma_{n_g}(x) = b_{n_g}\}$.  Query
  $\gamma_1,\ldots, \gamma_{n_g}$.
\item Set $\Gamma \gets \Gamma \cup \{\gamma_1,\ldots,
  \gamma_{n_g}\}$.
\item For each $b=(b_\gamma)_{\gamma \in \Gamma} \in
  \mathbb{F}_2^{|\Gamma|}$, let $V_b$ be the affine subspace $\{x \in \F_2^n:
  \forall \gamma \in \Gamma, \gamma(x) = b_\gamma\}$.  Set
  $\mathcal{F} \gets \bigcup_{b \in \mathbb{F}_2^{|\Gamma|}} \{f |_{V_b}\}$.
\item $\mathcal{S} \gets \bigcup_{h \in \mathcal{F}} \Supp(\widehat{h})$.
\end{enumerate}
\item Query all the parities in $\mathcal{S}$.
\end{enumerate}
\end{itemize}

Notation: After each iteration of the while loop in the procedure,
$\Gamma$ is the set of parities that have been queried so far,
$\mathcal{F}$ is the set of all restrictions of $f$ to the affine
subspaces obtained by different assignments to parities in $\Gamma$
and $\mathcal{S}$ the union of the Fourier supports of functions in
$\mathcal{F}$. Let  $\Gamma^{(i)}, \mathcal{F}^{(i)}$ and
$\mathcal{S}^{(i)}$ denote $\Gamma, \mathcal{F}$ and $\mathcal{S}$
resepectively at
the end of the $i$-th iteration of the while loop.

For each $i$, let $b=(b_\gamma)_{\gamma \in \Gamma^{(i)}} \in
\mathcal{F}_2^{|\Gamma^{(i)}|}$ and let $V_{b}$ be the affine subspace
defined by linear constraints $\{\gamma(x)=b_\gamma: \gamma \in
\Gamma^{(i)}\}$.  In $V_b$, more than one linear functions of the
original space may get identified as same.\footnote{By \textquoteleft
  same\textquoteright\ we also include their being negations of each
  other as the smaller subspace is an affine space and not always a
  vector space.}  More specifically, $\delta_1$ and $\delta_2$ get
identified as same in $V_b$ if and only if $\delta_1 + \delta_2 \in
\vspan\Gamma^{(i)}$. Thus, $\Supp(\widehat{f})$ gets partitioned
into equivalence classes, such that for each class, for every $b \in
\mathcal{F}_2^{|\Gamma^{(i)}|}$, the linear functions belonging to
that class are identified as same in $V_b$. 

Let $l^{(i)}$ denote the number of cosets of the subspace
$\vspan\Gamma^{(i)}$ with which $\Supp(\widehat{f})$ has non-empty
intersection. For $j=1,\ldots,l^{(i)}$, let $\beta_j^{(i)}$ be some
representative element in $\Supp(\widehat{f})$ of the $j$-th coset of
$\vspan \Gamma^{(i)}$ having non-empty intersection with
$\Supp(\widehat{f})$. For each $j$, let $\beta_j^{(i)} + \alpha_{j,1}^{(i)},
\ldots, \beta_j^{(i)} + \alpha_{j,{k_j}}^{(i)}$ be the $k_j^{(i)} (\geq
1)$ elements in $\Supp(\widehat{f})$ which are in the same coset of
$\mbox{span\ } \Gamma^{(i)}$ as $\beta_j^{(i)}$. For each $i,j$,
define the polynomials $P_j^{(i)}(x) := \displaystyle\sum_{l=1}^{k_j}
\widehat{f}\left(\beta_j^{(i)}+\alpha_{j,l}^{(i)}\right)\chi_{\alpha_{j,l}^{(i)}}(x)$. Note that the polynomials $P_j^{(i)}$, $j=1,\ldots,l^{(i)}$, are non-zero.

Given this notation, we can then write the Fourier expansion of $f$ in
the following form:
\[f(x)=\displaystyle\sum_{j=1}^{l^{(i)}}P_j^{(i)}(x)\chi_{\beta_j^{(i)}}(x).\]

\begin{observation}
\label{obs:sum}
$\forall i, \displaystyle\sum_{j=1}^{l^{(i)}}k_j^{(i)}=s$.
\end{observation}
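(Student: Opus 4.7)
The statement is essentially a bookkeeping identity, and my plan is simply to unpack the definitions and invoke the fact that cosets of a subgroup partition the ambient group. Concretely, $\vspan \Gamma^{(i)}$ is a subgroup of $\widehat{\mathbb{F}_2^n}$, so its cosets partition $\widehat{\mathbb{F}_2^n}$, and therefore restrict to a partition of $\Supp(\widehat{f})$ into the (disjoint) nonempty intersections with cosets.

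By the definitions preceding the observation, $l^{(i)}$ counts precisely the number of cosets of $\vspan \Gamma^{(i)}$ that meet $\Supp(\widehat{f})$, so only these $l^{(i)}$ cells of the partition are nonempty. Moreover, for each $j \in \{1,\ldots,l^{(i)}\}$, the elements $\beta_j^{(i)} + \alpha_{j,1}^{(i)}, \ldots, \beta_j^{(i)} + \alpha_{j,k_j^{(i)}}^{(i)}$ are defined to be \emph{all} the elements of $\Supp(\widehat{f})$ lying in the $j$-th such coset, so the $j$-th cell has size exactly $k_j^{(i)}$.

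Summing the sizes of the cells of a partition recovers the size of the partitioned set, so $\sum_{j=1}^{l^{(i)}} k_j^{(i)} = |\Supp(\widehat{f})| = \spa(f) = s$, which is the claimed identity. There is no genuine obstacle here; the only thing to be careful about is that the indexing $j=1,\ldots,l^{(i)}$ really enumerates \emph{every} coset that contributes to $\Supp(\widehat{f})$ (so that no element of the support is omitted) and that the $\beta_j^{(i)} + \alpha_{j,l}^{(i)}$ are distinct across $l$ for fixed $j$ (so that no element is counted twice), both of which follow directly from the definitions given just above the observation.
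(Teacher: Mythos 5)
Your proof is correct and matches the paper's intent exactly: the paper states this as an unproved observation, treating it as immediate from the fact that the nonempty intersections of $\Supp(\widehat{f})$ with the cosets of $\vspan\Gamma^{(i)}$ partition $\Supp(\widehat{f})$, which has size $s$. Your write-up simply makes that bookkeeping explicit, so there is nothing to add.
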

\begin{observation}
\label{obs:lbound}
$|\mathcal{S}^{(i)}| = l^{(i)}$.
\end{observation}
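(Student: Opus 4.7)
The plan is to write $f|_{V_b}$ explicitly using the coset decomposition $f(x) = \sum_{j=1}^{l^{(i)}} P_j^{(i)}(x)\,\chi_{\beta_j^{(i)}}(x)$ established just above the statement, and then take the union of Fourier supports over all $b \in \F_2^{|\Gamma^{(i)}|}$. I first note that each $\alpha_{j,l}^{(i)}$ lies in $\vspan \Gamma^{(i)}$, so on $V_b$ the character $\chi_{\alpha_{j,l}^{(i)}}$ reduces to a constant sign $\sigma_{j,l}(b) \in \{\pm 1\}$, obtained by multiplying the fixed values $(-1)^{b_\gamma}$ for those $\gamma \in \Gamma^{(i)}$ whose $\F_2$-sum equals $\alpha_{j,l}^{(i)}$. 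Consequently $P_j^{(i)}$ takes the single value
\[ c_j(b) \;:=\; \sum_{l=1}^{k_j^{(i)}} \widehat{f}\bigl(\beta_j^{(i)} + \alpha_{j,l}^{(i)}\bigr)\,\sigma_{j,l}(b) \]
on $V_b$, which gives the expansion $f|_{V_b}(x) = \sum_{j=1}^{l^{(i)}} c_j(b)\,\chi_{\beta_j^{(i)}}(x)$ for $x \in V_b$.

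Next, because $\beta_1^{(i)},\ldots,\beta_{l^{(i)}}^{(i)}$ lie in pairwise distinct cosets of $\vspan \Gamma^{(i)}$, the restrictions $\chi_{\beta_j^{(i)}}\bigm|_{V_b}$ are pairwise distinct characters of $V_b$. Moreover, every $V_b$ shares the same direction subspace $(\vspan \Gamma^{(i)})^{\perp}$, so the character groups $\widehat{V_b}$ are canonically identified with the quotient $\widehat{\F_2^n}/\vspan \Gamma^{(i)}$, and the label $\beta_j^{(i)}$ represents the same element of $\mathcal{S}^{(i)}$ regardless of $b$. Therefore $\Supp(\widehat{f|_{V_b}}) = \{\beta_j^{(i)} : c_j(b) \neq 0\}$, and taking the union yields $\mathcal{S}^{(i)} = \{\beta_j^{(i)} : \exists b,\ c_j(b) \neq 0\}$.

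It remains to show that each index $j \in \{1,\ldots,l^{(i)}\}$ survives this union, i.e., that some $b$ makes $c_j(b) \neq 0$. But $c_j(b)$ is precisely the common value taken by $P_j^{(i)}$ on $V_b$, and the excerpt already remarks that $P_j^{(i)}$ is a nonzero polynomial (being a nontrivial real linear combination of distinct $\F_2^n$-characters, which are linearly independent as real-valued functions on $\F_2^n$). Hence $P_j^{(i)}$ is not identically zero, such a $b$ exists, and we conclude $|\mathcal{S}^{(i)}|=l^{(i)}$. I do not anticipate any real obstacle here: the argument is essentially bookkeeping showing that restriction to an affine subspace defined by $\Gamma^{(i)}$ collapses each coset of $\vspan \Gamma^{(i)}$ inside $\Supp(\widehat{f})$ to a single character of $V_b$, and that every such coset is realized by at least one choice of $b$.
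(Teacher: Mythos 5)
Your proof is correct and follows exactly the reasoning the paper intends for this observation (which it states without proof, relying on the coset decomposition $f=\sum_j P_j^{(i)}\chi_{\beta_j^{(i)}}$ set up immediately beforehand): each restriction's support sits inside the $l^{(i)}$ coset classes, and each class is realized for some $b$ because $P_j^{(i)}$ is a nonzero function on $\F_2^n$. Nothing further is needed.
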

We now argue that after every iteration of the while loop, there
exists a function $h \in \mathcal{F}^{(i)}$ which has large support.
\begin{lemma}
\label{lem:support}
After $i$-th iteration, there exists a $h \in \mathcal{F}^{(i)}$ such
that $|\Supp(\widehat{h})|$ is at least $\left({l^{(i)}}\right)^2/{s}$.
\end{lemma}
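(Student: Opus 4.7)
The plan is to exploit the decomposition
\[f(x) = \sum_{j=1}^{l^{(i)}} P_j^{(i)}(x)\, \chi_{\beta_j^{(i)}}(x)\]
together with the Uncertainty Principle (\lref[Theorem]{thm:up}) applied to each $P_j^{(i)}$, and then finish by a convexity (Cauchy--Schwarz / AM--HM) inequality using Observation~\ref{obs:sum}. First I would show that on each affine subspace $V_b$ (for $b \in \F_2^{|\Gamma^{(i)}|}$), each polynomial $P_j^{(i)}$ is a constant. This is immediate because every $\alpha_{j,l}^{(i)}$ lies in $\vspan \Gamma^{(i)}$, so each character $\chi_{\alpha_{j,l}^{(i)}}$ appearing in $P_j^{(i)}$ takes a fixed $\pm 1$ value on $V_b$. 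Denote this constant by $c_{j,b}^{(i)} \in \mathbb{R}$. Moreover, the characters $\chi_{\beta_j^{(i)}}$ for $j=1,\dots,l^{(i)}$ represent distinct cosets of $\vspan \Gamma^{(i)}$ and hence restrict to pairwise distinct (in fact, pairwise orthogonal) non-trivial characters on $V_b$. Therefore
\[|\Supp(\widehat{f|_{V_b}})| = |\{ j : c_{j,b}^{(i)} \neq 0\}|.\]

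Next I would average over the choice of $b$. Without loss of generality $\Gamma^{(i)}$ can be taken linearly independent, so for $x$ drawn uniformly from $\F_2^n$, the vector $(\gamma(x))_{\gamma \in \Gamma^{(i)}}$ is uniformly distributed on $\F_2^{|\Gamma^{(i)}|}$. Hence if $b$ is uniform on $\F_2^{|\Gamma^{(i)}|}$, the constant value $c_{j,b}^{(i)} = P_j^{(i)}(x)$ (for any $x\in V_b$) is distributed as $P_j^{(i)}$ evaluated at a uniform point of $\F_2^n$. Since $P_j^{(i)}$ is a non-zero multilinear polynomial of sparsity $k_j^{(i)}$, the Uncertainty Principle gives
\[\Pr_b\bigl[\,c_{j,b}^{(i)} \neq 0\,\bigr] \;\geq\; \frac{1}{k_j^{(i)}}.\]
Summing over $j$ and using linearity of expectation,
\[\mathbb{E}_b\bigl[\,|\Supp(\widehat{f|_{V_b}})|\,\bigr] \;\geq\; \sum_{j=1}^{l^{(i)}} \frac{1}{k_j^{(i)}}.\]

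Finally, I would apply the Cauchy--Schwarz inequality (equivalently AM--HM) in the form $\sum_j 1/k_j^{(i)} \geq (l^{(i)})^2 / \sum_j k_j^{(i)}$; by Observation~\ref{obs:sum} the denominator is exactly $s$, so the expectation is at least $(l^{(i)})^2/s$. Consequently some particular $b^\ast$ achieves this bound, and $h := f|_{V_{b^\ast}} \in \mathcal{F}^{(i)}$ has the required Fourier support size.

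The only subtle point I expect to have to be careful about is the passage from ``$P_j^{(i)}$ has support of size $k_j^{(i)}$ in the original character basis'' to the hypothesis of \lref[Theorem]{thm:up} as a multilinear real polynomial in $\pm 1$ variables. This is straightforward provided we regard $P_j^{(i)}$ as the standard polynomial expansion, so no distinct monomials can cancel; the rest is a clean expectation argument and a one-line convexity step.
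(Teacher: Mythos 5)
Your proposal is correct and follows essentially the same route as the paper: the same decomposition $f=\sum_j P_j^{(i)}\chi_{\beta_j^{(i)}}$, the Uncertainty Principle applied to each non-zero $P_j^{(i)}$ to lower-bound $\Pr_b[P_j^{(i)}\neq 0]$ by $1/k_j^{(i)}$, and the AM--HM step via Observation~\ref{obs:sum}. The extra care you take (that each $P_j^{(i)}$ is constant on $V_b$, that the restricted characters $\chi_{\beta_j^{(i)}}$ remain distinct, and that uniform $b$ corresponds to uniform $x$) just makes explicit what the paper leaves implicit.
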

\begin{proof}
Consider any function $f|_{V_b} \in \mathcal{F}^{(i)}$. The Fourier
decomposition of $f|_{V_b}$ is given by $f|_{V_b} =
\sum_{j=1}^{l^{(i)}} P_j^{(i)}(b)\chi_{\beta_j^{(i)}}(x)$. Thus,
$|\Supp(\widehat{f|_{V_b}})|$ is exactly the number of polynomials
$P_j^{(i)}, j=1,\dots,l^{(i)}$ such that $P_j^{(i)}(b)$ is non-zero. We analyze this
quantity as follows. Pick a $b \in \mathbb{F}_2^{|\Gamma^{(i)}|}$ uniformly at random. For
each $j$, $j=1,\ldots,l^{(i)}$, by \lref[Theorem]{thm:up},
$\Pr_b[P_j^{(i)}(b) \neq 0] \geq \frac{1}{k_j^{(i)}}$ (since
each $P_j^{(i)}$ is a non-zero polynomial). Thus,
\begin{align*}
\mathbb{E}_b\left[|\Supp(\widehat{f|_{V_b}})|\right] &\geq \sum_{j=1}^{l^{(i)}}
\frac{1}{k_j^{(i)}} 
\geq l^{(i)} \cdot \frac{1}{\left(\sum_{j=1}^{l^{(i)}}
    k_j^{(i)}\right)/l^{(i)}}  && \text{[By convexity of $1/x$]}\\
&= \frac{\left(l^{(i)}\right)^2}{s} && \text{[By
  \lref[Observation]{obs:sum}]}.
\end{align*}
Hence, there exists a $h \in \mathcal{F}^{(i)}$ such
that $|\Supp(\widehat{h})|$ is at least $\left({l^{(i)}}\right)^2/{s}$.
\end{proof}

\begin{lemma}
\label{lem:reduction}
 Assume that $\nadtproc_\tau(f)$  runs for $t$ iterations. Then for all $i$, $i=1,\ldots, t-1$,
\[l^{(i+1)} \leq l^{(i)}-\frac{\left({l^{(i)}}\right)^2/s-1}{2}.\]
\end{lemma}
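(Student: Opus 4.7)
The plan is to use \lref[Lemma]{lem:support} to pick $g\in\mathcal{F}^{(i)}$ of largest Fourier sparsity, so that $m := |\Supp(\widehat{g})| \geq \left(l^{(i)}\right)^2/s$, and then to analyze how adjoining the parities $\gamma_1,\ldots,\gamma_{n_g}$ merges cosets of $\vspan\Gamma^{(i)}$ into cosets of $\vspan\Gamma^{(i+1)} = \vspan\Gamma^{(i)} + W$, where $W := \vspan\{\gamma_1,\ldots,\gamma_{n_g}\}$.

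First I would pass to the quotient $Q := \widehat{\mathbb{F}_2^n}/\vspan\Gamma^{(i)}$, which is canonically isomorphic to the dual of the tangent subspace of $V_{b'}$ (writing $g = f|_{V_{b'}}$). Under this identification, the residues of $\beta_1^{(i)},\ldots,\beta_{l^{(i)}}^{(i)}$ are $l^{(i)}$ distinct elements of $Q$, the image of $W$ is a subspace $\overline{W}$ of $Q$, and $l^{(i+1)}$ exactly counts the cosets of $\overline{W}$ in $Q$ that contain at least one such residue.

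The key step is to use the defining property of $\gamma_1,\ldots,\gamma_{n_g}$: $g$ is constant on the smaller affine subspace cut out by the equations $\gamma_j = b_j$. Writing the Fourier expansion of $g$ on $V_{b'}$ and grouping its characters by cosets of $\overline{W}$, every non-trivial coset meeting $\Supp(\widehat{g})$ must have its characters cancel to zero in the restriction. Since a lone character cannot cancel, every such non-trivial coset must contain at least two elements of $\Supp(\widehat{g})$. Hence the number of cosets of $\overline{W}$ meeting $\Supp(\widehat{g})$ is at most $(m+1)/2$, the extra $+1$ allowing for the trivial coset, which may carry the constant term.

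Finally I would combine this with the trivial observation that the remaining $l^{(i)} - m$ residues (those outside $\Supp(\widehat{g})$) lie in at most $l^{(i)} - m$ further cosets of $\overline{W}$, obtaining
\[
l^{(i+1)} \;\leq\; \frac{m+1}{2} + \bigl(l^{(i)}-m\bigr) \;=\; l^{(i)} - \frac{m-1}{2} \;\leq\; l^{(i)} - \frac{\left(l^{(i)}\right)^2/s - 1}{2}.
\]
I expect the main obstacle to be bookkeeping: one must verify carefully that the Fourier-cancellation statement inside $\widehat{V_{b'}}$ really translates to the coset-merging statement inside $\widehat{\mathbb{F}_2^n}$ modulo $\vspan\Gamma^{(i+1)}$. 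Once the identification $Q \cong \widehat{V_{b'}}$ is set up cleanly, the counting above is immediate.
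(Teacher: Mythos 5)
Your proposal is correct and takes essentially the same route as the paper: choose the maximally sparse $g \in \mathcal{F}^{(i)}$ (whose sparsity \lref[Lemma]{lem:support} bounds below by $(l^{(i)})^2/s$), use the constancy of $g$ on the smaller affine subspace to argue that every character of $\Supp(\widehat{g})$ must be matched with at least one other within its coset of the newly queried parities, and then count cosets. Your explicit quotient-space bookkeeping merely spells out what the paper compresses into the one-line claim that the merging costs at least $\bigl(|\Supp(\widehat{g})|-1\bigr)/2$ elements of $\mathcal{S}^{(i)}$; the content is identical.
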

\begin{proof}
  Let $g$ be the chosen function at \lref[Step]{step:chooseg} in the
  $(i+1)$-th iteration of the procedure. Let
  $\gamma_1,\ldots,\gamma_{n_g}$ be the parities queried at that
  step. Hence there is $b=(b_1,\ldots,b_{n_g}) \in \mathbb{F}_2^{n_g}$
  such that $g$ is constant on the affine subspace $V_b$ obtained by
  setting each $\gamma_j$ to $b_j$ for $j=1,\ldots,n_g$. Since $g$ is
  constant on $V_b$, each non-zero parity in it's Fourier support must
  disappear in $V_b$. Thus, for every $b' =(b')_j \in
  \mathbb{F}_2^{n_g}$, in the affine space $V_{b'}$ obtained by
  restricting each $\gamma_j$ to $b'_j$, every non-zero parity in
  $\Supp(\widehat{g})$ is matched to some other parity in
  $\Supp(\widehat{g})$. Since $\Supp(\widehat{g}) \subseteq
  \mathcal{S}^{(i)}$, it follows that $|\mathcal{S}^{(i+1)}|$ is at
  least $\frac{\mbox{$|\Supp(\widehat{g})|$}-1}{2}$ less than
  $|\mathcal{S}^{(i)}|$. The proof now follows from
  \lref[Lemma]{lem:support} and \lref[Observation]{obs:lbound}.
\end{proof}
\begin{lemma}
 \label{lem:final}
Let $\nadtproc$ be run with parameter $\tau \geq \sqrt{2s}$. Assume that it runs for $t$ iterations. Then for $i=1,\ldots, t$, $l^{(i)} \leq \frac{4s}{i}$.
\end{lemma}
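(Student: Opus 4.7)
The plan is to prove the bound $l^{(i)} \leq 4s/i$ by induction on $i$, using the recurrence from \lref[Lemma]{lem:reduction}. First I would record the elementary fact that the sequence $(l^{(i)})$ is non-increasing in $i$: adding more parities to $\Gamma$ can only merge cosets of $\vspan \Gamma$ and hence cannot increase the number of such cosets meeting $\Supp(\widehat{f})$. Combined with $l^{(0)} = s$, this gives $l^{(i)} \leq s$ for all $i$, which settles the cases $i \in \{1,2,3,4\}$ immediately, since $4s/i \geq s \geq l^{(i)}$ in that range.

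For the inductive step I would fix $i \geq 4$ with $i+1 \leq t$, so that the $(i+1)$-th iteration of the while loop actually executes and \lref[Lemma]{lem:reduction} applies. Define $\phi(x) := x - (x^2/s - 1)/2 = x - x^2/(2s) + 1/2$; a short derivative calculation shows $\phi$ is increasing on $[0, s]$ with maximum at $x = s$. The inductive hypothesis $l^{(i)} \leq 4s/i \leq s$ together with the recurrence $l^{(i+1)} \leq \phi(l^{(i)})$ and the monotonicity of $\phi$ on $[0,s]$ then give
\[l^{(i+1)} \;\leq\; \phi\!\left(\tfrac{4s}{i}\right) \;=\; \tfrac{4s}{i} - \tfrac{8s}{i^2} + \tfrac{1}{2}.\]

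It then remains to check $\phi(4s/i) \leq 4s/(i+1)$. Clearing denominators reduces this to the elementary inequality $i^2(i+1) \leq 8s(i+2)$, which in turn is implied by $i^2 \leq 8s$. This last inequality is precisely where the calibration $\tau \geq \sqrt{2s}$ enters the proof: since the $(i+1)$-th iteration runs, the while-loop condition gives $l^{(i)} > \tau \geq \sqrt{2s}$, and combining with the inductive bound $l^{(i)} \leq 4s/i$ yields $4s/i > \sqrt{2s}$, i.e., $i^2 < 8s$, as required.

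The main delicate point I anticipate is simply the calibration of the constant $4$ in the target bound $4s/i$: the continuous analogue $da/di = -a^2/2$ of the recurrence is solved by $a(i) = 2/i$, so any constant strictly larger than $2$ is morally sufficient, but one needs enough slack to absorb the $+1/2$ lower-order term in $\phi$ and to match the $\sqrt{2s}$ threshold in $\tau$; the choice of $4$ makes the final algebraic inequality $i^2(i+1) \leq 8s(i+2)$ go through cleanly, and no more intricate bookkeeping is required.
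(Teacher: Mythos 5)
Your proof is correct and follows essentially the same route as the paper: induction on $i$ driven by the recurrence of \lref[Lemma]{lem:reduction}, with the while-loop condition $l^{(i)} > \tau \geq \sqrt{2s}$ used to absorb the additive $+1/2$ term (the paper absorbs it up front by bounding $\left(\left(l^{(i)}\right)^2/s-1\right)/2 \geq \left(l^{(i)}\right)^2/(4s)$, whereas you absorb it at the end via $i^2 < 8s$). Your separate treatment of the base cases $i \leq 4$ is in fact slightly more careful than the paper's inductive step, which formally substitutes $l^{(m)} \leq 4s/m$ into a completed square whose inner term goes negative when $m=1$.
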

\begin{proof}
 We will prove it by induction on $i$. Base case, $i=1$, is trivial as
 $l^{(1)}$ can be at most $s$. 

Now let us assume that the statement is true for all $i \leq m$. From
\lref[Lemma]{lem:reduction}, we have that $l^{(m+1)} \leq
l^{(m)}-\left(\left({l^{(m)}}\right)^2/s-1\right)/{2}$. Since $\gamma \geq
\sqrt{2s}$, $(\left({l^{(m)}}\right)^2/s-1)/{2}$ can be
lower bounded by $\left(l^{(m)}\right)^2/4s$. We thus have 
\begin{align*} l^{(m+1)} &\leq l^{(m)}-\frac{\left({l^{(m)}}\right)^2}{4s}
 =s-\left(\sqrt{s}-\frac{l^{(m)}}{2\sqrt{s}}\right)^2 \\
& \leq s-\left(\sqrt{s}-\frac{2s}{m\sqrt{s}}\right)^2 & \text{[By inductive hypothesis]} \\
& =\frac{4s(m-1)}{m^2} = \left(\frac{4s}{m+1}\right) \left( \frac{m^2-1}{m^2} \right) \\
& \leq \frac{4s}{m+1}. 
\end{align*}
This completes the proof of the lemma.
\end{proof}
Theorems~\ref{thm:th1} and \ref{thm:th2} follow easily from the above
lemma using \lref[Lemma]{lem:set} and \lref[Conjecture]{conj:tsang}
respectively as follows.
\begin{proof}[Proof of {\lref[Theorem]{thm:th1}}]
  Run the \nadtproc \ with parameter $\tau =
  \Theta\left(s^{2/3}\right)$. Since $l^{(i)} \leq \frac{4s}{i}$
  (\lref[Lemma]{lem:final}), the procedure terminates after
  $t=O(s^{1/3})$ iterations. From \lref[Lemma]{lem:set}, in the
  $i$-th iteration, the number of parities set is at most
  $O(\sqrt{l^{(i)}})$. Thus the total number of queries made by the
  procedure is
\begin{align*}
\displaystyle\sum_{i=1}^t O\left(\sqrt{l^{(i)}}\right) + \tau
 &= \displaystyle\sum_{i=1}^t O\left(\sqrt{\frac{s}{i}}\right) + \tau
 &&\text{[By \lref[Lemma]{lem:final}]}\\
 & =\sqrt{s}\left( \displaystyle\sum_{i=1}^t O\left(\sqrt{\frac{1}{i}}\right)\right) + \tau &&=\sqrt{s} \left( O \left( \int_1^t \frac{dx}{\sqrt{x}}\right)\right) + \tau 
 =O\left(\sqrt{s}. \sqrt{t}\right) + \tau \\
 & =O\left(s^{1/2 + 1/6} \right) + \tau && \text{[Since $t=O\left(s^{1/3})\right)$]}\\ 
 &=O\left(s^{2/3}\right) && \text{[Since $\tau=\Theta\left(s^{2/3}\right)$]}.
\end{align*}
Thus, $\mbox{NADT}_\oplus(f) = O\left(s^{2/3}\right)$. From \lref[Proposition]{prop:basic}, it follows that dim($f$)=$\mbox{NADT}_\oplus(f) = O\left(s^{2/3}\right)$.
\end{proof}

\begin{proof}[Proof of {\lref[Theorem]{thm:th2}}]
  Run the \nadtproc \  with parameter $\tau=2\sqrt{s}$. By
  \lref[Lemma]{lem:final}, it runs for at most
  $\frac{4s}{2\sqrt{s}}=2\sqrt{s}$ iterations. If
  \lref[Conjecture]{conj:tsang} is true, the total number of parities
  set is at most $O\left(\log^c s\right)2\sqrt{s} + \tau$ which is
  $\widetilde{O}\left(\sqrt{s}\right)$.
\end{proof}
\section{Acknowledgements}
The author would like to thank Arkadev Chattopadhyay and Prahladh Harsha for many helpful discussions. The author is thankful to Prahladh Harsha for his help in improving the presentation of this article significantly.

%
%
%

\bibliographystyle{alpha}
\bibliography{dimspar}

\end{document}